\documentclass[journal,twoside,web]{ieeecolor}
\usepackage{lcsys}
\usepackage{cite}
\usepackage{amsmath,amssymb,amsfonts}
\usepackage{algorithmic}
\usepackage{graphicx}
\usepackage{textcomp}
\usepackage{url}
\usepackage{caption}
\usepackage{subcaption}
\usepackage{hyperref}
\def\BibTeX{{\rm B\kern-.05em{\sc i\kern-.025em b}\kern-.08em
    T\kern-.1667em\lower.7ex\hbox{E}\kern-.125emX}}
\markboth{\journalname, VOL. XX, NO. XX, XXXX 2017}
{Author \MakeLowercase{\textit{et al.}}: Preparation of Papers for textsc{IEEE Control Systems
Letters} (November 2021)}

\newtheorem{definition}{Definition}
\newtheorem{assumption}{Assumption}
\newtheorem{problem}{\textbf{Problem}}

\newtheorem{proposition}{\textbf{Proposition}}

\newcommand{\rr}{\mathop{{\rm I}\mskip-4.0mu{\rm R}}\nolimits}
\newcommand{\Z}{\mathop{{\rm Z}\mskip-7.0mu{\rm Z}}\nolimits}

\newcommand{\Enc}[1]{\texttt{E}[#1]}
\newcommand{\Dec}[1]{\texttt{D}[#1]}

\pagestyle{empty}

\begin{document}

\title{A Verifiable Computing Scheme for Encrypted Control Systems}
\author{Francesca Stabile,  Walter Lucia,  Amr Youssef, Giuseppe Franz\`e
\thanks{The work was supported by Mitacs under the grant IT34199.}
	\thanks{Francesca Stabile and Giuseppe Franz\`e are with DIMEG, Universit\`{a}  della Calabria, Via Pietro Bucci, Cubo 42-C,  Rende (CS), 87036, Italy, {\tt \small stabilefrancesca99@outlook.com}, {\tt \small giuseppe.franze@unical.it.} 
} 
\thanks{Walter Lucia and Amr Youssef are with the  Concordia Institute for Information Systems Engineering (CIISE), Concordia University, Montreal,  Canada, {\tt\small walter.lucia@concordia.ca}, {\tt\small youssef@ciise.concordia.ca}. 
}
}

\maketitle

\thispagestyle{empty}

\begin{abstract}
The proliferation of cloud computing technologies has paved the way for deploying networked encrypted control systems, offering high performance, remote accessibility and privacy. However, in scenarios where the control algorithms  run on third-party cloud service providers, the control's logic might be changed by a malicious agent on the cloud. Consequently,  it is imperative to verify the correctness of the control signals received from the cloud.  Traditional verification methods, like zero-knowledge proof techniques, are computationally demanding in both proof generation and verification, may require several rounds of interactions between the prover and verifier and, consequently, are inapplicable in real-time control system applications. In this paper, we present a novel computationally inexpensive verifiable computing solution inspired by the probabilistic cut-and-choose approach. The proposed scheme allows the plant's actuator to validate the computations accomplished by the encrypted cloud-based networked controller without compromising the control scheme’s performance. We showcase the effectiveness and real-time applicability of the proposed verifiable computation scheme using a remotely controlled Khepera-IV differential-drive robot.
\end{abstract}

\begin{IEEEkeywords}
Verifiable computing, 	cloud-based control,  encrypted control
\end{IEEEkeywords}

\section{Introduction}\label{sec:introduction}
\IEEEPARstart{C}{Cloud}-based  control systems \cite{xia2022brief} exploit the well-established advantages of cloud and edge computing, such as increased reliability, easier scalability and reduced IT expenses to enhance the performance of various Cyber-Physical Systems (CPSs). However, these cloud-based systems hinge on communication links for exchanging measurement data and control signals between the plant and the cloud-based controller. Thus, the early research in the field of CPS security has been mainly centered on safeguarding these systems from network attackers with access to the measurement and/or control channels  \cite{teixeira2012attack}. Various control theory based approaches were proposed to mitigate the vulnerability to false data injection attacks \cite{giraldo2018survey}. 
To ensure the privacy of the computation at the cloud controller, homomorphic encryption (HE) schemes were proposed to implement polynomial control laws  on encrypted  data \cite{kim2016encrypting},  \cite{schluter2023brief}, \cite{darup2021encrypted}.  On the other hand, given the safety-critical nature of such systems, it is also necessary to provide the physical plant/smart actuator with an efficient mechanism to validate the correctness of the computations conducted by the cloud-based controller on encrypted data. Verifiable computing   aims not only to obtain the control input computed by the cloud and a proof of its correctness, but also to enable the plant/smart actuator to verify this proof with significantly less computational effort than calculating the control input from scratch. The simplest, yet least efficient, approach for verifiable computation is replication. In this scenario, the verifier (pant/actuator) utilizes multiple, presumably non-colluding cloud-based controllers to perform identical computation tasks. Upon receiving a minimum consensus of results from these multiple controllers, the verifier assumes those results are correct. 
An alternative method to ensure the integrity of the computation at the cloud based controller employs a hardware-based Trusted Execution Environment (TEE) such as Intel SGX, as demonstrated in \cite{naseri2021securing}. Nevertheless, the latest attacks on SGX have revealed that hardware remains vulnerable to compromise and this could lead to jeopardize both the TEE and the overall systems' security.
Although advanced cryptographic techniques for verifiable computation(e.g., see 
\cite{costello2015geppetto,parno2016pinocchio,bunz2018bulletproofs}, can be used to verify the integrity of outsourced computations, integrating them with the complex structure of HE schemes remains a challenge \cite{fiore2014efficiently,fiore2020boosting}. The (Boolean) circuit representations for control theory algorithms are too complex to implement using these techniques. Additionally, certain techniques like interactive proofs demand multiple rounds of communication between prover and verifier.
In the context of CPS, Cheon et al. \cite{cheon2020authenticated}  proposed a verifiable computation scheme which enables the actuator to verify the non-encrypted controller’s computation. However their scheme is only applicable to linear systems and it cannot be used for encrypted control scenarios. Mahfouzi et al.   \cite{mahfouzi2021secure}  presented a proof of concept of an industrial controller in a cloud-based verifiable computation framework using the Pequin, an end-to-end toolchain for verifiable computation, SNARKs, and probabilistic proofs. However, the authors did not consider preserving the confidentiality of messages transmitted between the local controller and the cloud.

\subsection{Paper's contribution}
The state-of-the-art lacks solutions capable of detecting  attacks against the integrity of cloud-based encrypted controllers. In this paper, we present a verifiable computing solution inspired by the cut-and-choose cryptographic technique, commonly employed in secure multiparty computation protocols \cite{Crépeau2005}, and cut-and-choose auditing \cite{chaum2008scantegrity}. In this approach, the party wishing to perform a secure computation first generates a set of computations and then sends them to a verifier, who proceeds to randomly select a subset from the set and asks the party to disclose the input and output of each computation. In the event that the computations within the chosen subset prove to be correct, the verifier can trust that the remaining computations are correct too, enabling the party to proceed with the secure computation. However, if any of the computations in the subset are found to be incorrect, the protocol is terminated. Our contributions can be summarized as follows.
\begin{itemize}
	\item  We propose a practical verifiable computing solution for encrypted control systems.  The proposed probabilistic scheme allows the plant’s actuator to efficiently validate the computations accomplished by the cloud-based encrypted networked controller without compromising the control system's performance.  Moreover,   we prove that the proposed solution ensures asymptotic detection of any integrity attack. These results are achieved by leveraging two essential ingredients: (i) an offline pre-computed set of control signals complying with the control law for a given set of decoy input parameters, and (ii) a semantically secure encryption mechanism which does not allow the adversary to distinguish the decoys from the actual measurements/control inputs.
	\item  We experimentally validate the performance and effectiveness of the proposed approach using a remotely maneuvered Khepera IV robot\footnote{\url{http://www.k-team.com/khepera-iv}}.  
\end{itemize}

\section{Preliminaries and Problem Formulation}\label{sec:preliminary_knowledge}

Given a vector $v$ and a matrix $M,$ $v_i$ denotes the $i-th$ element of $v$ and $M_{i,j}$ the $(i,j)-$entry of $M.$ Moreover, $I_{2}$ denotes a $2\times 2$ identity matrix while ${\bf{1}}_2$ and ${\bf{0}}_2$ denote column vectors of size two where both elements are equal to $1$ and $0,$ respectively.  Given a number $z$ in the message space of a cryptosystem, we denote the encryption and decryption procedures as $\Enc{z}$ and $\Dec{z}$, respectively.  Consider a  random variable $\mathcal{X}$ with binary possible outcomes ``\textit{success}'' and ``\textit{fail}.'' The probability of getting a 	``\textit{success}'' in a single trial is denoted as $p,$ while the probability of  $k+1$ successes in $k+1$ binary independent experiments is denoted as $p_{[0,\,k]}.$ 

\begin{definition}\label{def:homomorphic_cryptosystem}
	Homomorphic Encryption (HE) refers to a particular class of encryption mechanisms that enables mathematical operations (limited in their type and/or number) to be carried out directly on encrypted data.
	\hfill $\Box$
\end{definition}

\begin{definition}\label{homomorphic_definitions}
	Consider two arbitrary numbers, $z_1$ and $z_2$, in the message space of the cryptosystem. A cryptosystem is said to be \textit{multiplicatively homomorphic} if there exists an operation ``$\otimes$'' allowing encrypted multiplications, i.e., $ z_1z_2=\Dec{\Enc{z_1}\otimes \Enc{z_2}}.$ Analogously, a cryptosystem is said to be \textit{additively homomorphic} if there exists an operation ``$\oplus$''  such that encrypted additions can be performed, i.e., $z_1+z_2=\Dec{\Enc{z_1}\oplus\Enc{z_2}}.$ 
\end{definition}

\subsection{Networked control system setup}\label{sec:proposed_solution}

\begin{figure}[h!]
	\centering
	\includegraphics[width=0.75\columnwidth]{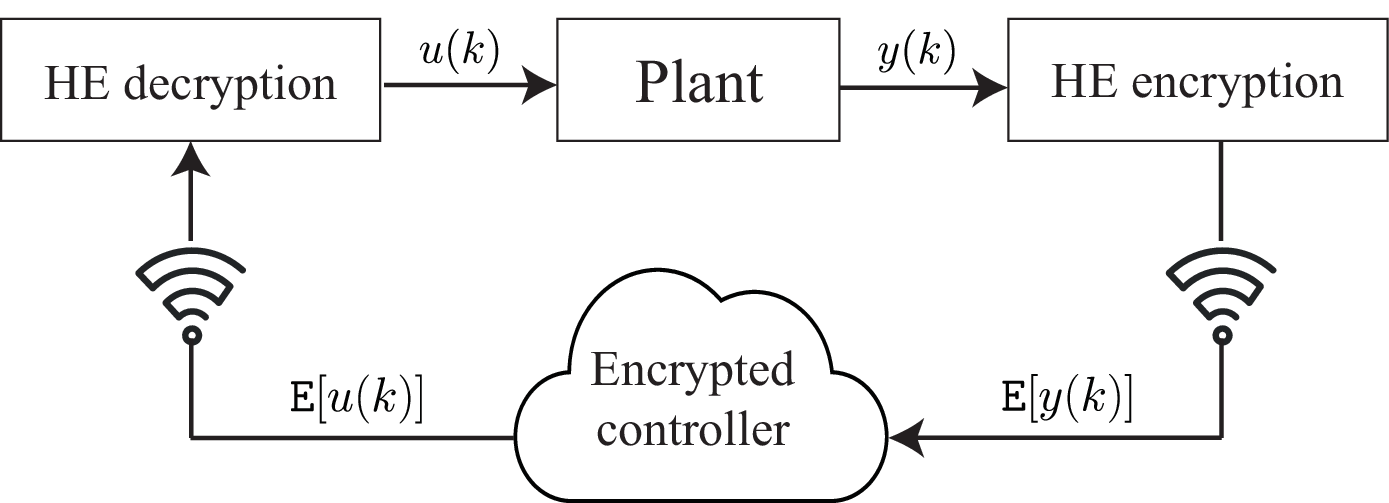}
	\caption{Cloud-based encrypted control scheme}
	\label{fig:encrypted_control_scheme}
\end{figure}

Of interest is the class of networked control systems shown in Fig.~\ref{fig:encrypted_control_scheme}, where the plant's dynamic is described by the discrete-time model
\begin{equation}\label{eq:plant_general}
	\begin{array}{c}
		x(k+1)=f(x(k),u(k)) ,\quad 
		y(k)= g(x(k),u(k)) 
	\end{array}
\end{equation}
with  $k\in \Z_{+}=\{0,1,\ldots\}$  the discrete-time index,    $x\in \rr^n,u\in \rr^m, y\in \rr^p$ the state, input, and output vectors, and $f:  \rr^n \times \rr^m \rightarrow \rr^n$, $g:  \rr^n \times \rr^m \rightarrow \rr^p.$ Moreover, a semantically secure \cite{smart2016cryptography}  randomized HE scheme is used to transmit signals between the plant and the cloud-based controller, allowing the control logic to be executed directly on the encrypted data. The considered dynamic output feedback tracking controller presents the following plaintext dynamic
\begin{equation} \label{eq:controller_general}
	\begin{array}{rcl}
		x_c(k+1)&=&f_c(x_c(k),y(k),r(k))  \\
		u(k) &=& g_c(x_c(k),y(k),r(k)) 
	\end{array}
\end{equation}
where $x_c\in \rr^{n_c}$ is the state of the controller, $r(k) \in \rr^p$ the reference signal, and
$f_c:  \rr^{n_c} \times \rr^p   \times \rr^p \rightarrow \rr^{n_c}$, $g_c:  \rr^{n_c} \times \rr^p  \times \rr^p \rightarrow \rr^m$  the control strategy.

\begin{assumption}
	The control  \eqref{eq:controller_general}  can be implemented on the encrypted data. 
	\hfill $\Box$
\end{assumption}

In what follows, the encrypted-version of \eqref{eq:controller_general} is generically described by:

\begin{equation} \label{eq:controller_general_encrypted}
	\begin{array}{rcl}
		\Enc{x_c(k+1)}&=&f_c^E(\Enc{x_c(k)},\Enc{y(k)},\Enc{r(k)})  \\
		\Enc{u(k)} &=& g_c^E(\Enc{x_c(k)},\Enc{y(k)},\Enc{r(k)}) 
	\end{array}
\end{equation}
where $f_c^E(\cdot,\cdot,\cdot)$ and $g_c^E(\cdot,\cdot,\cdot)$ denote the encrypted controller's operations corresponding to $f_c(\cdot,\cdot,\cdot)$ and $g_c(\cdot,\cdot,\cdot)$.

\subsection{Threat model and Problem formulation}\label{sect:threatModel}

An adversary wants to replace the executed encrypted control operations \eqref{eq:controller_general_encrypted}, i.e.,  
\begin{equation}\label{eq:control_adversary}
	\left\{f_c^E(\cdot,\cdot,\cdot),g_c^E(\cdot,\cdot,\cdot)\right\} \rightarrow  \left\{f_a^E(\cdot,\cdot,\cdot),g_a^E(\cdot,\cdot,\cdot)\right\}
\end{equation}
where
$\left\{f_a^E(\cdot,\cdot,\cdot),g_a^E(\cdot,\cdot,\cdot)\right\}\neq\left\{f_c^E(\cdot,\cdot,\cdot),g_c^E(\cdot,\cdot,\cdot)\right\}$ define the adversary encrypted control logic. Consequently, the encrypted $\Enc{u(k)}$ and $\Enc{x_c(k+1)}$ signals received by the plant from the cloud might be the result of the adversary control logic \eqref{eq:control_adversary}. 
In this work, we assume that the attacker's actions do not force the plant to reach a configuration outside of working domain of the legitimate control law \eqref{eq:controller_general_encrypted}. Consequently, the considered class of attacks is sufficiently intelligent to be stealthy against safety-preserving mechanisms on the plant's side, see e.g., \cite{escudero2023safety}.  
\begin{problem}\label{problem_def}
	\it 
	Given the encrypted networked control architecture shown in Fig.~\ref{fig:encrypted_control_scheme}, design a secure verifiable computing solution capable of assessing the integrity of the encrypted control logic \eqref{eq:controller_general_encrypted}. Moreover, the solution must perform computations in real-time, avoid interference with control actions, and probabilistically ensure the absence of stealthy control logic alterations over time.
\end{problem}

\section{Proposed solution}\label{SecProposedSolution}

\begin{figure}[h!]
	\centering
	\includegraphics[width=0.95\columnwidth]{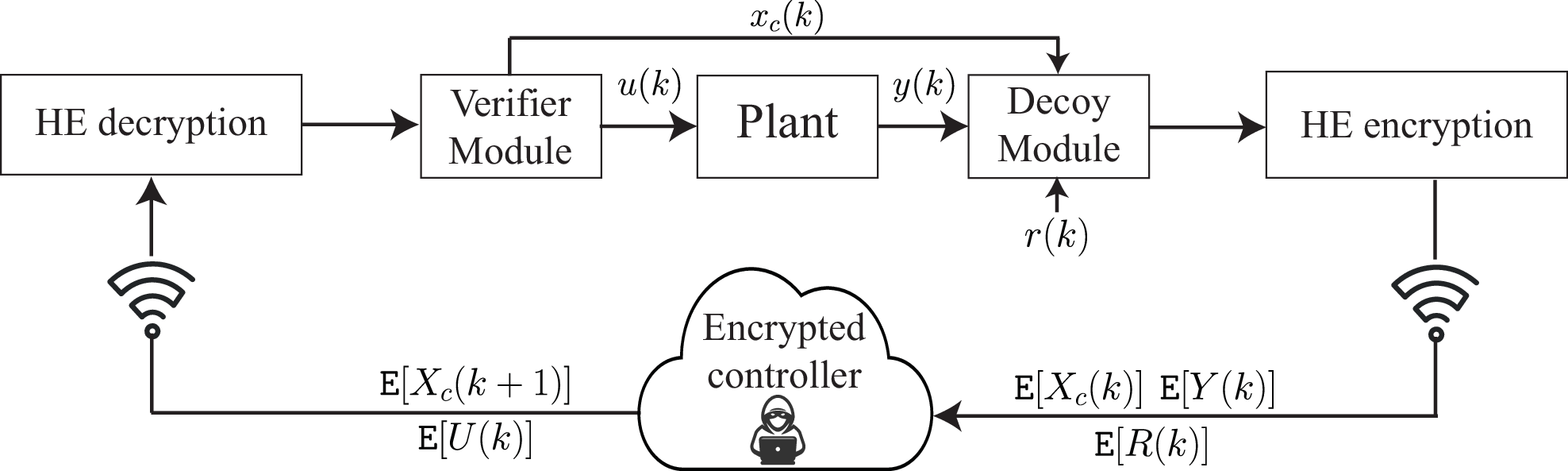}
	\caption{Proposed verifiable computing scheme for encrypted control systems.}
	\label{fig:verifiable_computing_scheme}
\end{figure}

The proposed solution (see Fig.~\ref{fig:verifiable_computing_scheme}) is inspired by the cut-and-choose cryptographic technique, commonly employed in secure multiparty computation protocols \cite{Crépeau2005}. 
For the considered encrypted networked control system setup, we customize the cut-and-choose paradigm as follows:
\begin{itemize}
	\item The cloud-based encrypted controller is the party performing the agreed upon  encrypted control system logic, as described by equation \eqref{eq:controller_general_encrypted}.
	\item A decoy module, local to the plant, is in charge of generating a set of  $n_d>0$  decoy (e.g., artificial) measurement vectors, namely $y_d^i(k),\,i=1,\ldots,n_d.$ The $n_d+1$  measurement vectors (decoys + actual measurement) are encrypted and sent to the controller. The controller executes \eqref{eq:controller_general_encrypted} for each received measurement vector and produces a set of $n_d+1$ computations (i.e., encrypted control inputs).
	\item The verifier is a subsystem local to the plant checking the validity of the decrypted control input vectors obtained from the cloud for all $n_d$ decoy measurement vectors (disclosed to the verifier by the decoy subsystem). If all the decoy computations are deemed accurate, then $u(k)$ is assumed correct and applied to the plant.
\end{itemize}

Moreover, to ensure that the proposed decoy-based verifiable computing paradigm provides a solution to Problem~\ref{problem_def}, the following objectives must be achieved:

\begin{itemize}
	\item [(]$\!\!\!O_1)\!$ The verification of the computations for each decoy must be real-time affordable and not require the online local re-computation of the control algorithm itself.
	\item [(]$\!\!\!O_2)\!$ The decoy measurements should not interfere with the computation of the control action $\Enc{u(k)}$ for the actual sensor measurements $\Enc{y(k)}$.     
	\item [(]$\!\!\!O_3)\!$ The attacker's probability $p_{[0,\,k]}$ of changing the control logic as in \eqref{eq:control_adversary} without being detected asymptotically converges to zero, i.e. $\lim_{k\rightarrow \infty}p_{[0,\,k]}=0.$
	
\end{itemize}

\subsection{Proposed verifiable computing scheme}\label{sec:proposed_ver_comp}
In what follows, first the proposed verifiable computing scheme is presented and then its effectiveness in addressing the points $(O_1)$-$(O_3)$ is proven.

$\circ$  For each required computation, the encrypted controller receives from the plant the set $\{\Enc{{x}_c(k)},\Enc{y(k)},\Enc{r(k)}\}.$ The encrypted controller first resets its state to $\Enc{{x}_c(k)}$, then computes the updated internal state $\Enc{x_c(k+1)}$ and control input $\Enc{u(k)}$ which are both transmitted to the plant. If the computation is pertaining to the real measurement, the plaintext corresponding to $\Enc{{x}_c(k)}$   is equal to the one provided by the controller at the previous iteration.

$\circ$ Offline, a set $\mathcal{D}$ of $N_d$ tuples $(\bar{u}_d,\bar{x}_{c_d}^+,\bar{x}_{c_d},\bar{y}_d,\bar{r}_d),$ complying with the controller's logic \eqref{eq:controller_general} is defined, i.e.,
\begin{equation}\label{eq:pre-computed-decoy-set}
	\begin{array}{rcl}
		\mathcal{D}\!\!\!\!&=&\!\!\!\!\!\{ (\bar{u}_d^j,\bar{x}_{c_d}^{j^+},\underbrace{\{\bar{x}_{c_d}^j,\bar{y}^j_d,\bar{r}^j_d\}}_{\text{decoy}}), j=1,\ldots,N_d :\vspace{-0.0cm}\\
		&&    \underbrace{\bar{u}^j_d=g_c(\bar{x}_{c_d}^j,\bar{y}^j_d,\bar{r}^j_d)}_{\text{expected control input}} \},\,
		\underbrace{\bar{x}_{c_d}^{j^+}=f_c(\bar{x}_{c_d}^j,\bar{y}^j_d,\bar{r}^j_d)}_{\text{expected next state}} \}
	\end{array}
\end{equation}

$\circ$ At each $k\geq 0$, a set of $n_d\geq 1$ decoys is selected (with possible repetitions) from $\mathcal{D}$ and freshly encrypted into $\{\Enc{{x}_{c_d}^i(k)},\Enc{y_d^i(k)},\Enc{r_d^i(k)}\}_{i=1}^{n_d}$. 

$\circ$ At each $k\geq 0$ the $n_d$ decoys are randomly arranged with the actual vector $\{x_c(k),y(k),r(k)\}$ to form the  following encrypted matrices which are sent to the controller (see Fig.~\ref{fig:verifiable_computing_scheme}):
\begin{equation}\label{eq:transmitted_input_decoy}
	\!\!\begin{array}{rcl}
		\Enc{X_c(k)}&\!\!\!\!:=&\!\!\!\!
		\left[
		\Enc{x_c(k)},\Enc{x_{c_d}^1(k)},\ldots,\Enc{x_{c_d}^{n_d}(k)}
		\right] 
		\Omega_k \\
		\Enc{Y(k)}&\!\!\!\!:=&\!\!\!\!
		\left[
		\Enc{y(k)},\Enc{y_d^1(k)},\ldots,\Enc{y_{d}^{n_d}(k)}
		\right] 
		\Omega_k \\
		\Enc{R(k)}&\!\!\!\!:=&\!\!\!\!
		\left[
		\Enc{r(k)},\Enc{r_d^1(k)},\ldots,\Enc{r_{d}^{n_d}(k)}
		\right] 
		\Omega_k \\
	\end{array} 
\end{equation}
with $\Omega(k)\in \rr^{(n_d+1) \times (n_d+1)}$ a random permutation matrix. On the other hand, the controller, without the knowledge of $\Omega_k,$ performs the encrypted control algorithm \eqref{eq:controller_general_encrypted} for each column of the received matrices and arranges the outputs into the following, which are sent to the plant (see Fig.~\ref{fig:verifiable_computing_scheme}):
\begin{equation}\label{eq:transmitted_output_decoy}
	\!\!\!\!\!
	\begin{array}{rcl}
		\Enc{U(k)}&\!\!\!\!:=&\!\!\!\!
		[
		\Enc{u(k)},\Enc{u_{d}^1(k)},\ldots,\Enc{u_{d}^{n_d}(k)}
		] 
		\Omega_k \\
		\Enc{X_c(k+1)}&\!\!\!\! :=&\!\!\!\!
		[
		\Enc{x_{c}(k+1)},\Enc{x_{c_d}^{1}(k+1)},\\&&\!\!\!\!
		\qquad \qquad \qquad\ldots,\Enc{x_{c_d}^{n_d}(k+1)}
		] 
		\Omega_k 
	\end{array} 
\end{equation}

$\circ$  The verifier decrypts \eqref{eq:transmitted_output_decoy} and  checks, for each used decoy  $\{{x}_{c_d}^i(k),{y_d^i(k)},{r_d^{i}(k)}\}_{i=1}^{n_d},$ if the computed control inputs and next states are equal to the expected values $\{u^i_d(k)\}_{i=1}^{n_d}$  and  $\{x^i_{c_d}(k+1)\}_{i=1}^{n_d}$  in \eqref{eq:pre-computed-decoy-set}. 

\begin{proposition}\label{prop:absence_of_stealthy}
	\it 
	If $N_d \geq 2,$ the proposed verifiable computing scheme achieves the objectives $(O_1)$-$(O_3).$
\end{proposition}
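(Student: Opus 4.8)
The plan is to prove the three requirements $(O_1)$--$(O_3)$ separately, since each isolates a different property of the scheme (online efficiency, non-interference of the decoys, and asymptotic detection). For $(O_1)$, I would simply note that the expected control inputs $\bar{u}_d^j$ and next states $\bar{x}_{c_d}^{j^+}$ are fixed \emph{offline} in the pool $\mathcal{D}$ of \eqref{eq:pre-computed-decoy-set}. Hence the verifier's online workload, for each of the $n_d$ used decoys, reduces to one decryption of the corresponding columns of \eqref{eq:transmitted_output_decoy} plus a finite number of equality comparisons against the precomputed tuples disclosed by the decoy subsystem. No evaluation of $f_c,g_c$ (nor of their encrypted counterparts $f_c^E,g_c^E$) is ever carried out locally, which is exactly the content of $(O_1)$.

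For $(O_2)$, the key observation is that the encrypted controller applies $\{f_c^E,g_c^E\}$ \emph{column-wise} and, as stated in the scheme, resets its state to the received $\Enc{x_c(k)}$ before processing each column, so no internal state is carried across columns. Since $\Omega_k$ is a permutation matrix acting on the right of all three input matrices in \eqref{eq:transmitted_input_decoy}, I would verify the commutation property that applying the column-wise control map to the permuted columns equals permuting the column-wise outputs: right-multiplication by $\Omega_k$ only relabels the columns. Consequently $\Enc{U(k)}\Omega_k^{-1}$ and $\Enc{X_c(k+1)}\Omega_k^{-1}$ have as their genuine column exactly $g_c^E(\Enc{x_c(k)},\Enc{y(k)},\Enc{r(k)})$ and $f_c^E(\Enc{x_c(k)},\Enc{y(k)},\Enc{r(k)})$, i.e., the true control input and next state are functions of the actual measurement alone and are unaffected by the decoy columns, establishing $(O_2)$.

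The substantive part is $(O_3)$, which I would handle in two stages. First, a single-step argument: because the HE scheme is semantically secure and every decoy is \emph{freshly} re-encrypted at each $k$, the encrypted columns of \eqref{eq:transmitted_input_decoy} are computationally indistinguishable to the cloud, and the uniformly random $\Omega_k$ makes the position of the genuine column independent of the adversary's view. An adversary performing the substitution \eqref{eq:control_adversary} must apply $\{f_a^E,g_a^E\}$ to at least one processed column; to both corrupt the genuine control input and pass every disclosed decoy check, its deviation must land on exactly the genuine column and on no decoy. Since $N_d\geq 2$ guarantees a non-degenerate pool and $n_d\geq 1$ guarantees at least one disclosed decoy is present, the genuine column is uniform over $n_d+1\geq 2$ indistinguishable positions, so the per-step probability of an undetected effective deviation satisfies $p\leq \tfrac{1}{n_d+1}\leq \tfrac{1}{2}<1$. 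Second, because the permutation and the encryption randomness are drawn independently at each time step, the per-step detection events are independent, whence $p_{[0,\,k]}=\prod_{j=0}^{k}p_j\leq\left(\tfrac{1}{2}\right)^{k+1}$ and therefore $\lim_{k\to\infty}p_{[0,\,k]}=0$, which is $(O_3)$.

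The delicate point I expect to be the main obstacle is the reduction from the cryptographic indistinguishability guarantee to the combinatorial bound on $p$. One must argue that \emph{no} adversarial strategy---including an adaptive one that varies which columns and how many it corrupts across time, or one that attempts to exploit the known structure of the legitimate logic---can do asymptotically better than guessing the genuine column's position; equivalently, that any strategy achieving a deterministic evasion would yield a distinguisher contradicting semantic security. Making this reduction precise is what rules out a persistent undetected attack and forces the uniform bound $p<1$ needed for the product in $p_{[0,\,k]}$ to vanish.
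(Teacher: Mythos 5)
Your treatments of $(O_1)$ and $(O_2)$ match the paper's (offline precomputation reduces verification to decryption plus equality checks; the state reset makes the decoy columns irrelevant to $\Enc{u(k)}$, with your permutation-commutation remark a harmless addition), and your final bound $p=1/(n_d+1)$ with $p_{[0,\,k]}=(1/(n_d+1))^{k+1}\to 0$ is exactly the paper's. But there is a genuine gap in your $(O_3)$: nothing in your argument actually uses $N_d\geq 2$, and as written it would ``prove'' the same conclusion for $N_d=1$, which is false. The flaw is the step asserting that an evading adversary ``must apply $\{f_a^E,g_a^E\}$ to at least one processed column'' and ``land on exactly the genuine column.'' That overlooks the replay strategy the paper uses precisely to justify the hypothesis: with $N_d=1$ the single decoy tuple and its expected output are fixed and eventually known, so the attacker can guess once at $k=0$ which column is the decoy and thereafter send $\Enc{U(k)}=[\Enc{u_d^1(0)},\Enc{u_d^1(0)}]\Omega_k$ for all $k$ (re-encrypted to avoid trivial ciphertext-equality detection). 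This evades the decoy check \emph{deterministically} after a one-time lucky guess, so detection probability stays at $0.5$ irrespective of $k$ and $p_{[0,\,k]}\not\to 0$ --- despite the columns being semantically indistinguishable and the permutation uniform, i.e., despite every ingredient you invoke. Your per-step independence claim also fails here: the attacker's single guess at $k=0$ correlates success across all times.

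The missing idea is therefore the specific role of $N_d\geq 2$: with at least two decoys drawn randomly (and freshly encrypted) at each step, the attacker cannot predict which decoy's expected output to forge, so the replay strategy is itself reduced to a fresh per-step guess, and \emph{every} strategy that alters $\Enc{u(k)}$ while passing the disclosed checks succeeds per step with probability at most $1/(n_d+1)$, independently across $k$ by the fresh randomness of $\Omega_k$ and the encryption. You flagged the ``reduction from indistinguishability to the combinatorial bound'' as the delicate point, which is fair, but the concrete obstacle is not adaptivity in which columns get corrupted --- it is the all-columns-consistent-with-a-known-decoy attack, and identifying it (as the paper does) is what makes the hypothesis $N_d\geq 2$ do actual work in the proof.
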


\begin{proof}The proof can be divided into three parts:
	
	\noindent - $(O_1):$ by constructions, the decoy set \eqref{eq:pre-computed-decoy-set} is pre-computed. Therefore, for each used decoy $\{\Enc{{x}_{c_d}^i(k)},\Enc{y_d^i(k)},\Enc{r_d^{i}(k)}\},$ the verification of $u_{d}^i(k)$ and $x_{c_d}^{i}(k+1)$ prescribes only their comparison with the pre-computed value  in \eqref{eq:pre-computed-decoy-set}. Consequently, the computational overhead of the proposed solution is mainly related to the encryption and decryption of the decoys.
	
	\noindent - $(O_2):$ The controller resets the internal state before each computation. Moreover, for the actual measurement, the reset value is equal to the controller's state at the previous iteration. Consequently, regardless of $n_d,\, N_d,\,\Omega_k,$ the computations pertaining to  the decoys do not affect $\Enc{u(k)}.$ 
	
	\noindent - $(O_3):$  
	If the attacker is only interested in arbitrarily affecting the control logic and $n_d=1, N_d=1$ (i.e., $1$ decoy from a pool of a single decoy), then the attacker can guess at $k=0$ the decoy measurement vector and use the valid decoy output $\Enc{u_d^1(0)}$ to send to the plant the matrix $\Enc{U(k)}=[\Enc{u_d^1(0)}, \Enc{u_d^1(0)}]\Omega_k,\,\forall\,k$ which would trigger an anomaly with probability $0.5,$ irrespective of $k$. To avoid copying the same ciphertext, which could lead to trivial detection, the attacker can re-encrypt it. On the other hand, for $N_d= 2,$ or any other greater value,  the probability that an attacker can, in a single trial, successfully change  $\Enc{u(k)}$ while leaving the decoy computation unchanged is equal to the probability of guessing in which column of $\Enc{Y(k)}$ the measurement vector $\Enc{y(k)}$ is contained. Given the randomized nature of the used homomorphic encrypting scheme and $\Omega_k,$ each column vector has an equal probability, resulting in a Bernoulli probability of cheating
	$p=1/(n_d+1).$ Consequently,  in the  interval $[0, k]$,  the probability $p_{[0,\,k]}$ that the attacker remains always undetected follows the Binomial distribution $(\frac{1}{n_d+1})^{k+1}$ that asymptotically converges to zero as $k\rightarrow \infty.$ 
\end{proof}

\section{Proof-of-concept validation using a remotely controlled mobile robot}\label{sec:proof}

In what follows, by considering as case of study a remotely controlled differential-drive robot, we verify the effectiveness of the proposed verifiable computing scheme. 

\subsection{Robot Model}
\label{subsec. differential_drive_robot_model}

We consider a differential-drive robot equipped with two real independently driven wheels and a front castor wheel for body support.  The pose of the robot is described by the planar coordinates $(p_x,p_y)$ of its center of mass and orientation $\theta$.
By resorting to the forward Euler discretization method and a sampling time $t_s>0,$  the discrete-time kinematic model of the differential-drive is:
\begin{equation}\label{eq:robot_discrete_time_equation}
	\begin{array}{rl}
		p_x(k+1) = & p_x(k) \!+\! \frac{t_s R}{2}\cos{\theta(k)}(\omega_r(k) + \omega_l(k)) \vspace{0.1cm}\\
		p_y(k+1) = & p_y(k) \!+\! \frac{t_s R}{2}\sin{\theta(k)}(\omega_r(k) + \omega_l(k)) \vspace{0.1cm}\\ 
		\theta(k+1) = & \theta(k) \!+\! \frac{t_s R}{D}(\omega_r(k) - \omega_l(k)) \\
	\end{array}
\end{equation}
where $R>0$ is the radius of the wheels, $D>0$ the rear axle length, and $u^{D}=[\omega_r,$ $\omega_l]^T\in \rr^2$ the control input vector, which consists of the angular velocities of the right and left wheel, respectively. 
Moreover, $x(k)=[p_x(k), p_y(k), \theta(k)]^T\in \rr^3$ denotes the robot's state vector. 

By denoting with $v(k)$ and $\omega(k)$ the linear and angular velocities of the center of mass of the robot, 
it is possible to apply to \eqref{eq:robot_discrete_time_equation} the static transformation
\begin{equation}\label{eq:from_diff-drive_to_unicycle}
	\left[\begin{array}{c}
		v(k)  \\
		\omega(k) 
	\end{array}\right] = H \left[\begin{array}{c} \omega_r(k)\\
		\omega_l(k)  \end{array}\right],\quad
	H:=\left[\begin{array}{cc}
		\frac{R}{2} &\frac{R}{2}\vspace{0.1cm}\\
		\frac{R}{D} &\frac{-R}{D}
	\end{array}\right], 
\end{equation}
and describe the robot's behavior by means of the following discrete-time unicycle model:
\begin{equation}\label{eq:robot_unicycle-model}
	\begin{array}{rl}
		{p}_x(k+1)=&{p}_x(k)+t_s v(k)\cos{\theta(k)} \vspace{0.1cm} \\
		{p}_y(k+1)=&{p}_y(k)+t_s v(k)\sin{\theta(k)} \vspace{0.1cm} \\
		{\theta}(k+1)=&{\theta}(k)+T\omega(k)
	\end{array}
\end{equation}
Since \eqref{eq:robot_unicycle-model} has been obtained using an Euler forward discretization and a static transformation, we can exploit the commutative property between feedback linearization and the input-output linearization used in  \cite[Property 1]{tiriolo2022receding} to linearize the discrete-time model of the unicycle. In particular, by considering a small scalar $b>0$ and two virtual outputs
\begin{equation}\label{eq:input-outout-transf}
	y(k)=\left[\begin{array}{cc}
		p_x(k)+b\cos\theta(k),\,
		p_y(k)+b\sin\theta(k) 
	\end{array}\right]^T, 
\end{equation}
representing the coordinates of a fictitious point $\bf B$ displaced at a distance $b$ from $[p_x,p_y]^T,$ the  state-feedback law
\begin{equation}\label{feedback-linearization}
	\left[\!\!
	\begin{array}{c}
		v\\
		\omega
	\end{array}
	\!\!\right]
	\!=\!
	T_{FL}(\theta)\!
	\left[\!
	\begin{array}{c}
		u_1\\
		u_2
	\end{array}
	\!\right],\,\,	
	T_{FL}(\theta)\!=\!\left[\!\!\!
	\begin{array}{cc}
		\cos\theta&\sin\theta\\
		\frac{-\sin\theta}{b}&\frac{\cos\theta}{b} \\
	\end{array}\!\!
	\right]
\end{equation}
recasts  \eqref{eq:robot_unicycle-model} into a two-single integrator model with a decoupled nonlinear internal dynamics  \cite[Property 1]{tiriolo2022receding}:
\begin{subequations}\label{eq:input-output-lin-sys}
	\begin{gather}
		y(k+1)=Ay(k)+Bu(k),\,\, A=I_{2},\,\, B=TI_{2} \label{eq:linear_model}\\
		\theta(k+1)=\theta(k)+T\frac{-\sin\theta(k)u_1(k)+\cos\theta(k)u_2(k)}{b} \label{eq:internal-dynamics}
	\end{gather}
\end{subequations}
where $u(k)=[u_1(k),\,u_2(k)]^T\in \rr^2$  are the control inputs of the feedback-linearized robot model.  As stated in  \cite[Remark~1]{tiriolo2022receding}, any  linear tracking controller for \eqref{eq:input-outout-transf} allows $y(k)$ to track any reference trajectory with a stable internal dynamics for $\theta(k).$

\subsection{HE cryptosystem}
\label{subsubsec. Paillier}
The data exchanged with the controller are encrypted using the Paillier cryptosystem \cite{paillier1999public}. The  Paillier cryptosystem is a randomized and semantically secure partially homomorphic encryption scheme which allows encrypted additions of two ciphertexts (see Definition~\ref{homomorphic_definitions}) and multiplications of a ciphertext by a plaintext number, i.e., $z_1z_2=\Dec{\Enc{z_1}\otimes z_2}.$

\subsection{Tracking controller}
\label{subsec. PI_controller}
Consider a desired robot's reference pose $p^r(k)=[p_x^r(k),p_y^r(k), \theta^r(k)]^T \in \rr^3,\forall\,k\geq 0$ and the associated 2D reference trajectory, namely $r^{B}(k),$ for the $\bf B$ point
\begin{equation}\label{eq:ref_timing_law}
	r(k)=[p_x^r(k) +b\cos \theta^r(k),\,p_y^r(k)+b \sin \theta^r(k)]^T,\,\, \forall \,k\geq 0.
\end{equation}
A standard discrete-time Proportional Integral (PI) controller is used to allow the $\bf B$  point to track $r(k).$  The PI control's law can be written as in \eqref{eq:controller_general}, resulting in the following state-space representation
\begin{equation}\label{eq:PI_state_space}
	\begin{array}{rcl}
		x_{c}(k+1)&=&x_c(k)+T_s (r(k)-y(k))\\
		u(k)&=& K_i x_c(k) + K_p(r(k)-y(k)) 
	\end{array}
\end{equation}
where $x_c\in \rr^2,\,K_p=k_pI_2\in \rr^{2\times 2}$ is the controller integral state vector, $T_s=t_sI_2\in \rr^{2\times 2},$ and $K_i=k_iI_2\in \rr^{2\times 2},k_p,k_i\in \rr$ are the controller's gains. Equivalently, \eqref{eq:PI_state_space} can be represented as the following input-output linear relation 
\begin{equation}\label{eq:PI_matrix_form}
	\begin{bmatrix}
		x_c(k+1) \\ u(k) 
	\end{bmatrix} = \underbrace{\begin{bmatrix}
			I_2 & T_s & -T_s  \\
			K_i & K_p & -K_p
	\end{bmatrix}}_{\text{$K$}}
	\begin{bmatrix}
		x_c(k) \\ r(k) \\ y(k)
	\end{bmatrix}.
\end{equation}
To allow the control law \eqref{eq:PI_matrix_form} to be executed in encrypted form on the cloud, the matrix  $K \in \rr^{4\times 6}$ is there pre-uploaded and available in plaintext (non-encrypted matrix). On the other hand, the vector $v(k)=\begin{bmatrix}
	x_c(k) & r(k) & y(k)\end{bmatrix}^T\in \rr^6,$ before transmission to the cloud, is encoded. Consequently,  the control law
\eqref{eq:PI_matrix_form} is computed on the cloud, in an encrypted form, as: 
\begin{equation}\label{eq:encryptedPI}
	\!\!\!\!
	\begin{array}{rcr}
		\Enc{x_{c,l}(k+1)}\!\!&\!\!\!\!\!=\!\!\!\!\!\!\!&\!\!(V_{l,1} \otimes \Enc{v_1(k)})\!\oplus\!\!\ldots\!\!\oplus\! (V_{l,6} \otimes
		\Enc{v_6(k)})\\
		\Enc{u_l(k)}\!\!&\!\!\!\!\!=\!\!\!\!\!\!\!&\!\!(Z_{l,1} \otimes \Enc{v_1(k)})\!\oplus\!\!\ldots \!\!\oplus \! (Z_{l,6} \otimes
		\Enc{v_6(k)}),\\
		&& l=1,2,
	\end{array}
\end{equation}
where 
$V=\begin{bmatrix}
	I_2 & T_s & -T_s\end{bmatrix}\in \rr^{2\times 6},$  $Z=\begin{bmatrix}
	K_i & K_p & -K_p\end{bmatrix}\in \rr^{2\times 6}.$ 
On the plant's side, after decryption, i.e., $u(k)= \left[\Dec{\Enc{u_1(k)}},\Dec{\Enc{u_2(k)}}\right]^T,$ the  angular velocities of right and left wheels are recovered as
\begin{equation}\label{eq:transfomation_of_inputs}
	u^D(k)=H^{-1}T_{FL}(\theta(k))u(k).
\end{equation}

\subsection{Setup and experimental results}\label{sec:setupandexperiments}
\begin{figure}[h!]
	\centering
	\includegraphics[width=0.8\columnwidth]{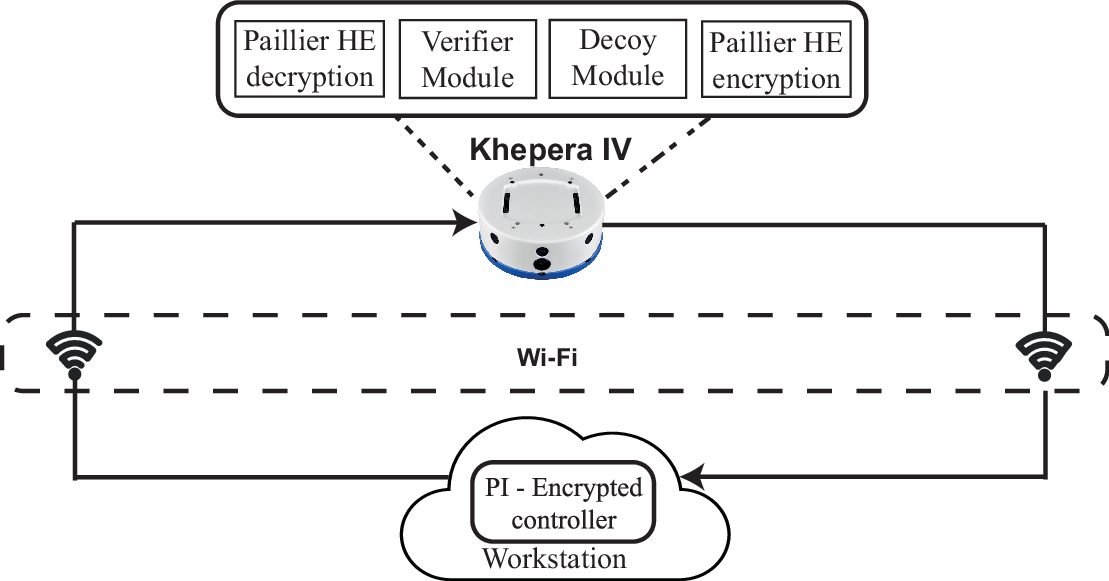}
	\caption{Experimental setup.}
	\label{fig:Experimental_utilized_setup}
\end{figure}
The used experimental networked control system setup is illustrated in Fig.~\ref{fig:Experimental_utilized_setup}, where a Khepera IV robot is remotely and wirelessly controlled by a workstation acting as the remote cloud.  The Khepera IV is a differential-drive robot with $R=0.0210[m]$ and $D=0.10470[m],$ and it is equipped with an 800MHz ARM Cortex-A8 Processor with C64x Fixed Point DSP core and 256 MB of RAM. The cloud is simulated using a workstation running Windows 10 with an Intel Core i9-13900KF processor with 24 cores, where the robot's encrypted control law \eqref{eq:encryptedPI} is executed. The Verifier, Decoder, Encryption and Decryption modules and \eqref{eq:transfomation_of_inputs} are all executed on the Khepera robot's. The remote exchange of encrypted data between the robot and the workstation has been implemented using the 802.11 b/g WiFi protocol has been used.
The encrypted control law has been implemented using Phyton and the \textit{eclib} library (\url{https://pypi.org/project/eclib/}) and using as control knobs $t_s=0.15[\sec],$ $b=0.1,$  $k_p=4$, and  $k_i=0.2.$ Moreover,  the  Paillier encryption is performed using $|p|=|q|=512$ and a quantization parameter $\delta=0.0001.$
To speed-up the arithmetic operations necessary to perform encryption, decryption and homomorphic multiplications, the C-coded Python extension module \textit{gmpy2} (\url{https://pypi.org/project/gmpy2/}) has been used.
The reference trajectory \eqref{eq:ref_timing_law}, whose path in the dashed red line shown in Fig.~\ref{fig:crypt_trajectory_with_attack}, has been obtained interpolating a set of $22$ waypoints distributed along the path using a cubic spline configured such that the average longitudinal velocity between any two consecutive waypoints is fixed and equal to $0.09[m/s]$.

The decoy module is configured with $n_d=1,$ $N_d=2,$ i.e., to randomly choose a single decoy from a pool of 2 pre-computed decoy tuples: 
\begin{equation}\label{eq:used-decoy-set}
	\begin{array}{rcl}
		\mathcal{D}\!&=&\! \{ 
		(2 \cdot{} {\bf{1}}_2,0.075\cdot{} {\bf{1}}_2,\underbrace{\{{\bf{0}}_2,{2 \cdot{} \bf{1}}_2,2.5 \cdot{} {\bf{1}}_2\}}_{\text{decoy 1}}),\\
		&&(-3 \cdot{} {\bf{1}}_2,4.85\cdot{} {\bf{1}}_2,\underbrace{\{5\cdot{}{\bf{1}}_2,{\bf{1}}_2,{\bf{0}}_2\}}_{\text{decoy 2}})
		\}.
	\end{array} 
\end{equation}
Consequently, for the given choices, the probability that any attack remains undetected over the discrete time interval $[0, k]$ is $(\frac{1}{2})^{k+1}.$ At each $k,$ the cloud receives
\begin{equation*}
	\begin{array}{c}
		\Enc{X_c(k)}=
		\left[
		\Enc{x_c(k)},\Enc{x_{c_d}^1(k)}
		\right] 
		\Omega_k \\
		\Enc{Y(k)}=
		\left[
		\Enc{y(k)},\Enc{y_d^1(k)}
		\right] 
		\Omega_k \\
		\Enc{R(k)}=
		\left[
		\Enc{r(k)},\Enc{r_d^1(k)}
		\right] 
		\Omega_k
	\end{array} 
\end{equation*}
and without the knowledge of $\Omega_k,$ it executes the encrypted PI control law \eqref{eq:encryptedPI} starting for each column of $\Enc{X_c(k)},\Enc{Y(k)},\Enc{R(k)}.$ Then, all the encrypted outputs are collected into the following vectors 
\begin{equation*}
	\begin{array}{c}
		\Enc{U(k)}=
		[
		\Enc{u(k)},\Enc{u_{d}^1(k)}
		] 
		\Omega_k \\
		\Enc{X_c(k+1)}=
		[
		\Enc{x_{c}(k+1)},\Enc{x_{c_d}^{1}(k+1)}
		] 
		\Omega_k 
	\end{array} 
\end{equation*}
and sent to the robot's Paillier HE decryptor module, which decrypts and sends the plaintext vectors to the Verifier. If at the time instant $\bar{k}\geq 0,$ the Verifier finds an incorrect computation related to the decoy, it will stop the robot in the current position, i.e., $u^D(k)=[0,0]^T,\forall\,k\geq \bar{k}.$

Two experiments have been performed to verify that the proposed verifiable computing protocol \textit{(i)}  does not interfere with tracking operations of the networked control system, \textit{(ii)} and it is able to promptly detect cyber-attacks affecting the integrity of the encrypted control logic. In particular, in the first experiment, we consider a scenario where no cyber-attacks affect the encrypted computation, while in the second, a cyber-attack violates the integrity of the PI Encrypted controller starting from $k=200\, (t=30\,[\sec])$  to disrupt the reference tracking task. We have configured the attacker to disrupt the encrypted PI logic as follows. First, it allows the computation of \eqref{eq:encryptedPI} for the first set of measurements and records the obtained encrypted outputs. Then, instead of re-computing \eqref{eq:encryptedPI} for the second set of data, it forces the output to be equal to the outcome of the first computation. 
As a consequence, $\forall\,k\geq 200$, the transmitted  values are (with an equal probability of 0.5)  \textit{either}  
\begin{itemize}
	\item (\textit{Case 1}): $ 
	\Enc{U(k)}= 
	[\Enc{u_{d}^1(k)},\Enc{u_{d}^1(k)}],$
	$\Enc{X_c(k+1)}=[\Enc{x_{c_d}^{1}(k+1)},\Enc{x_{c_d}^{1}(k+1)}],$ \textit{or}
	\item (\textit{Case 2}): $\Enc{U(k)}=[\Enc{u(k)},\Enc{u(k)}],$
	$\Enc{X_c(k+1)}=[\Enc{x_{c}^j(k+1)},\Enc{x_{c}^j(k+1)}],$ 
\end{itemize}
where only \textit{Case 1} bypasses the verification performed on the decoy. Note that even if the attack is aware of the used decoy set $\mathcal{D}$, the randomized nature of the used cryptosystem does not allow it to distinguish between real and decoy measurements. Consequently, the considered attack scenario is representative of any other deception attack against the integrity of the encrypted controller.

The  experimental results are collected in Figs.~\ref{fig:trajectory}-\ref{fig:CPU_time_boxPlot}. In the subplots \ref{fig:crypt_trajectory_no_attack} and \ref{fig:orientation_without_attack} it is possible to observe that, as expected, the proposed verifiable computing operations do not interfere with the real-time operations of the control loop. Indeed, as shown in the box plot of Fig.~\eqref{fig:CPU_time_boxPlot}, the maximum time to execute the control loop of Fig.~\ref{fig:Experimental_utilized_setup} is equal to $0.0631 [\sec]$ which is  smaller than the used sampling time. Moreover, the computational overhead caused by the proposed scheme, mainly due to the encryption and decryption of the decoys, is  smaller than $0.005\,[\sec]$ with an average of $0.0032\,[\sec].$ On the other hand, the subplots \ref{fig:crypt_trajectory_with_attack} and \ref{fig:orientation_with_attack} show that the Verifier module was able to detect the attack at $t=30[\sec]$ and to stop the robot after such occurrence. The instantaneous detection finds justification in the fact that at $t=30[\sec],$ \textit{decoy 1} was sent along the real measurement vectors while the cloud returned $U(30)=[[0.045,\,0.082]^T], [0.045,\,0.082]^T]$ (\textit{Case 2} for the attacker's strategy) which was not compatible with the expected decoy output.   The demo pertaining to the performed experiments is available at the following YouTube link: 
\href{http://tinyurl.com/4a8u4y9r}{http://tinyurl.com/4a8u4y9r}.

\begin{figure}
	\centering
	\begin{subfigure}[b]{0.48\columnwidth}
		\centering
		\includegraphics[width=0.85\columnwidth]{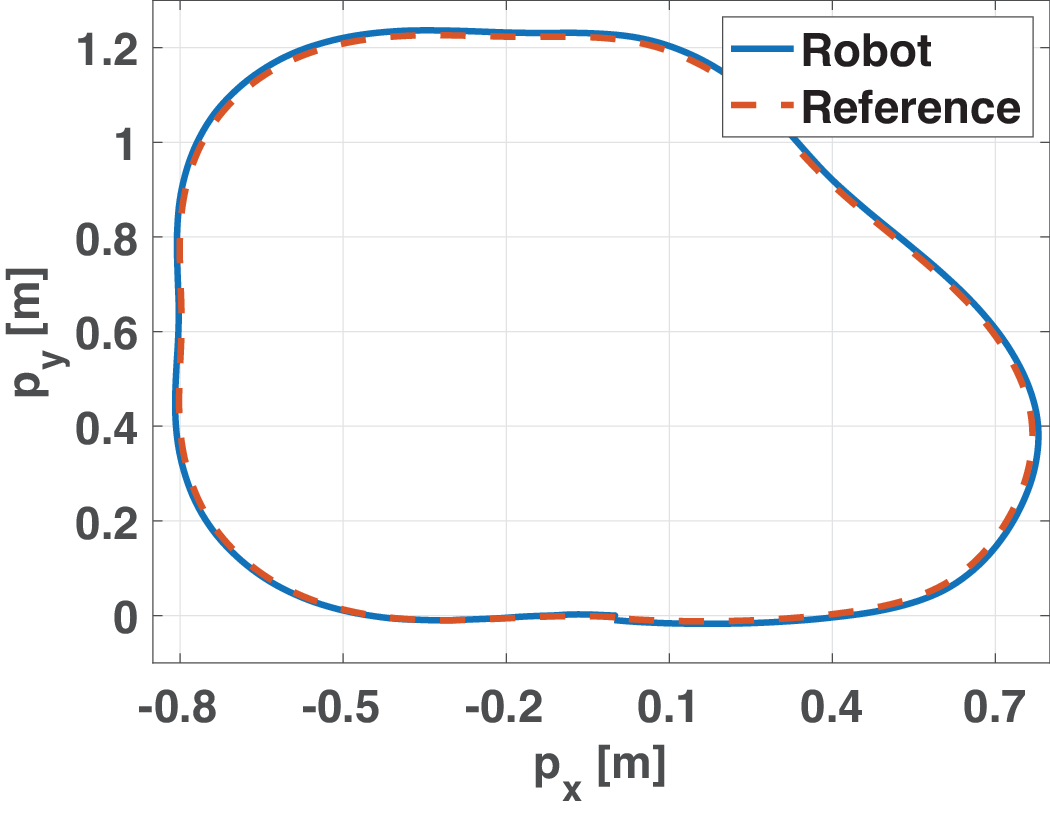}
		\caption{Without attack.}
		\label{fig:crypt_trajectory_no_attack}
	\end{subfigure}
	\hfill
	\begin{subfigure}[b]{0.48\columnwidth}
		\centering
		\includegraphics[width=0.85\columnwidth]{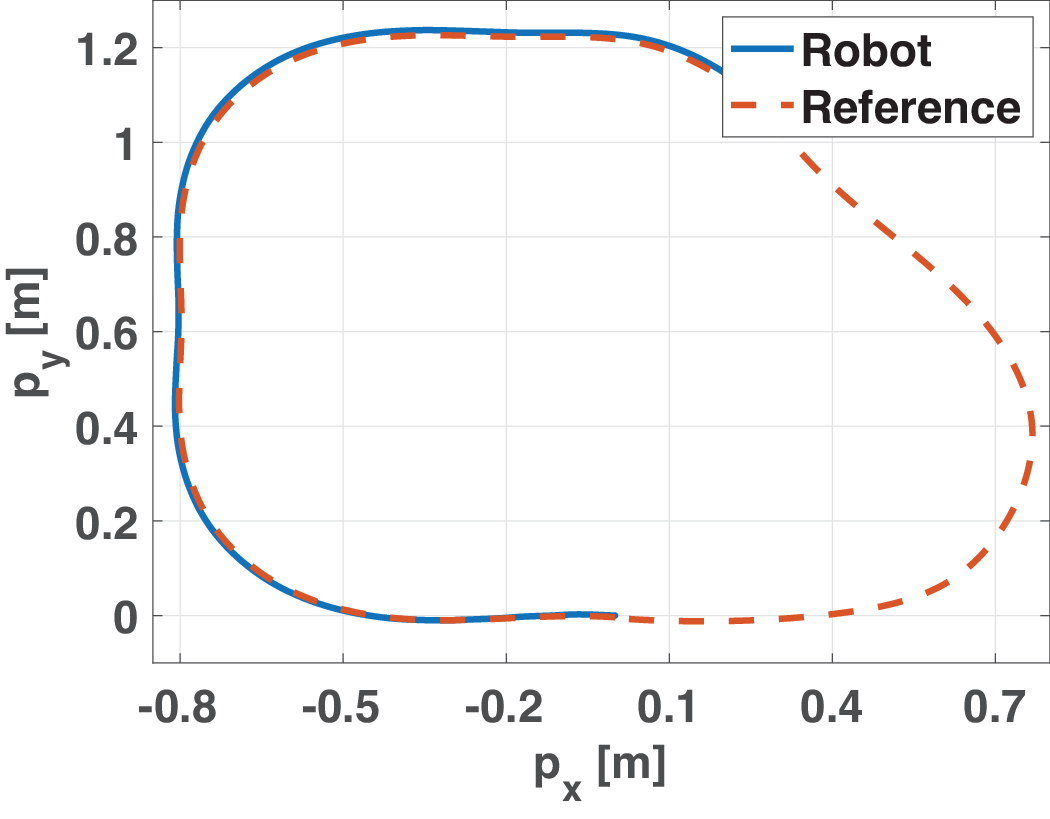}
		\caption{With attacks.}
		\label{fig:crypt_trajectory_with_attack}
	\end{subfigure}
	\hfill
	\caption{Robots trajectory.} 
	\label{fig:trajectory}
\end{figure}

\begin{figure}
	\centering
	\begin{subfigure}[b]{0.48\columnwidth}
		\centering
		\includegraphics[width=0.85\columnwidth]{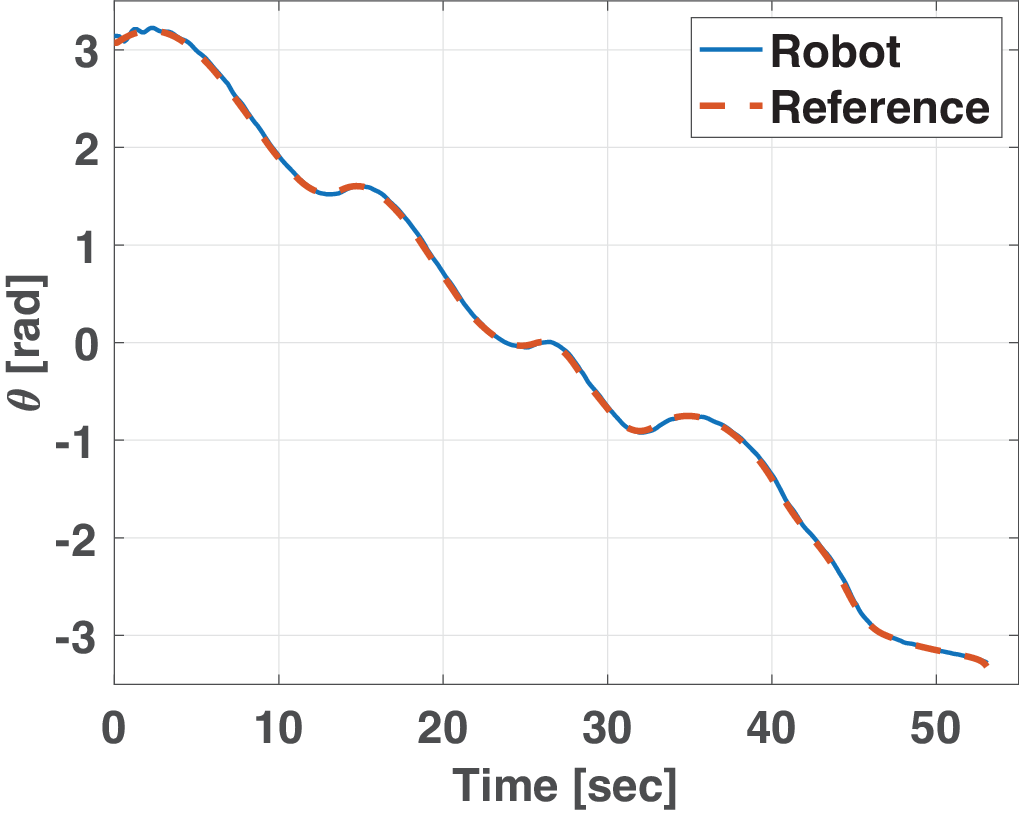}
		\caption{Without attacks.}
		\label{fig:orientation_without_attack}
	\end{subfigure}
	\hfill
	\begin{subfigure}[b]{0.48\columnwidth}
		\centering
		\includegraphics[width=0.85\columnwidth]{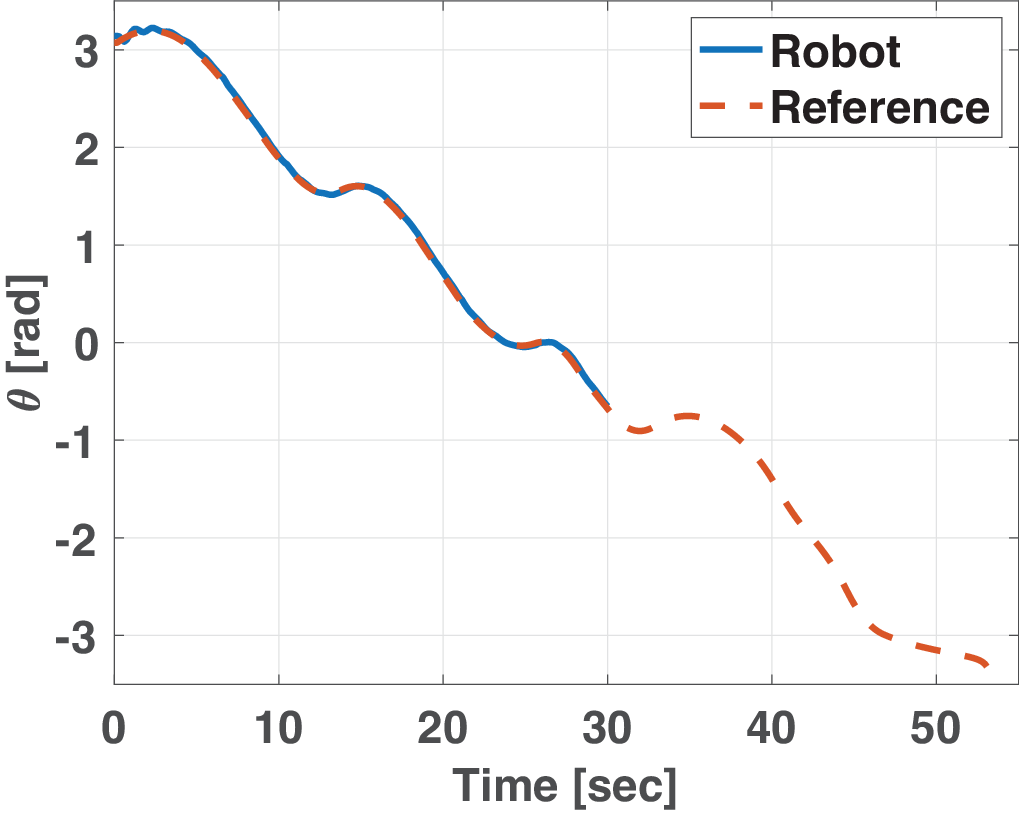}
		\caption{With attacks.}
		\label{fig:orientation_with_attack}
	\end{subfigure}
	\hfill
	\caption{Robot's orientation.} 
	\label{fig:orientation}
\end{figure} 
\begin{figure}[h!]
	\centering
	\includegraphics[width=0.75\columnwidth]{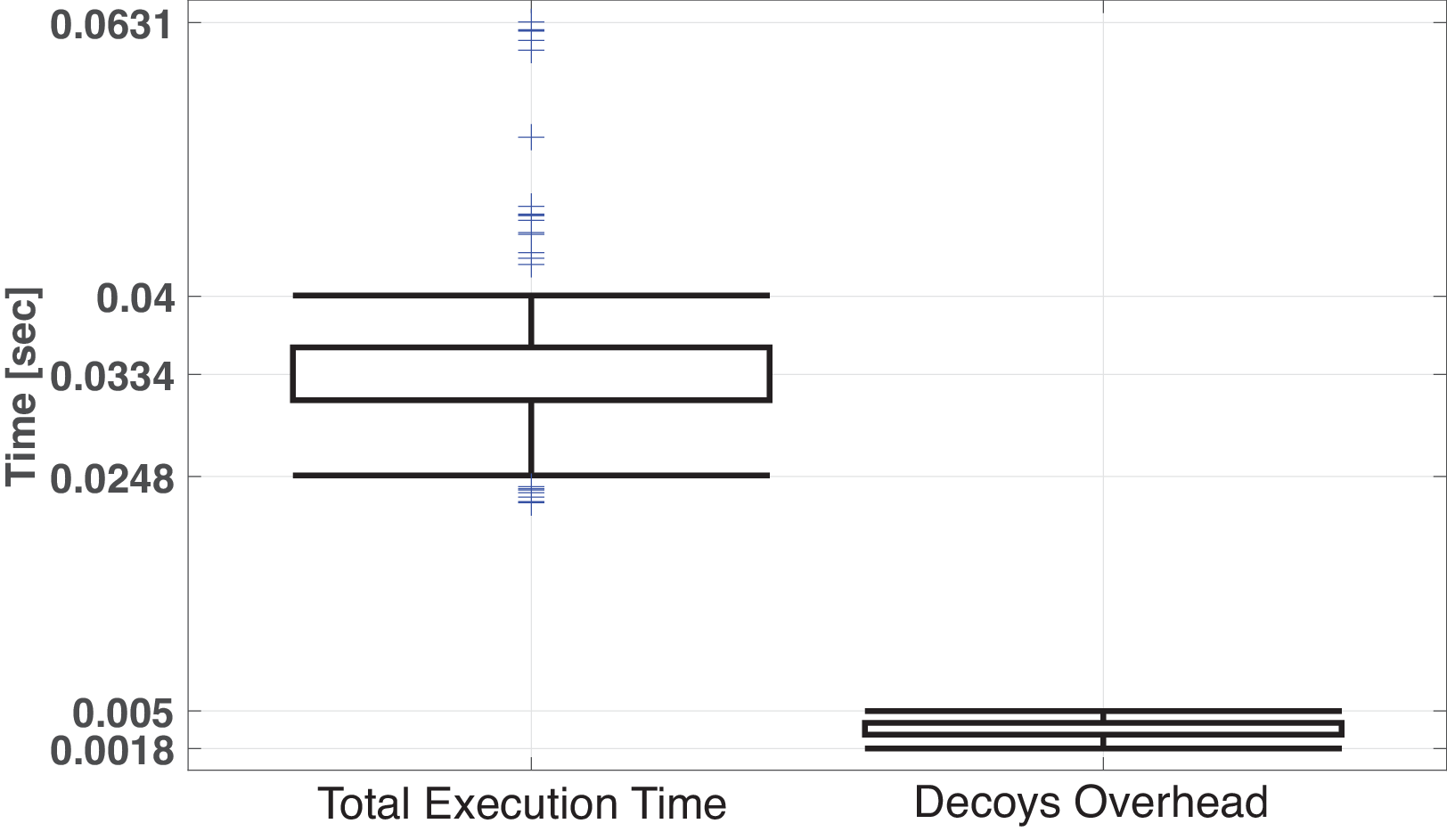}
	\caption{Boxplot for encrypted control loop execution time.}
	\label{fig:CPU_time_boxPlot}
\end{figure}

\section{Conclusions}\label{sec. conclusions}
We presented an efficient verifiable computing solution for encrypted cloud-based control systems. The proposed approach relies on probabilistically checkable proof that enables the plant's actuator to authenticate the computations performed by the encrypted networked controller without compromising the performance of the control scheme. The effectiveness and real-time applicability of the proposed scheme have been demonstrated through experiments using a remotely controlled Khepera-IV differential-drive robot.

\bibliographystyle{IEEEtran}
\bibliography{Ref}

\begin{thebibliography}{10}
\providecommand{\url}[1]{#1}
\csname url@rmstyle\endcsname
\providecommand{\newblock}{\relax}
\providecommand{\bibinfo}[2]{#2}
\providecommand\BIBentrySTDinterwordspacing{\spaceskip=0pt\relax}
\providecommand\BIBentryALTinterwordstretchfactor{4}
\providecommand\BIBentryALTinterwordspacing{\spaceskip=\fontdimen2\font plus
\BIBentryALTinterwordstretchfactor\fontdimen3\font minus
  \fontdimen4\font\relax}
\providecommand\BIBforeignlanguage[2]{{%
\expandafter\ifx\csname l@#1\endcsname\relax
\typeout{** WARNING: IEEEtran.bst: No hyphenation pattern has been}%
\typeout{** loaded for the language `#1'. Using the pattern for}%
\typeout{** the default language instead.}%
\else
\language=\csname l@#1\endcsname
\fi
#2}}

\bibitem{xia2022brief}
Y.~Xia, Y.~Zhang, L.~Dai, Y.~Zhan, and Z.~Guo, ``A brief survey on recent
  advances in cloud control systems,'' \emph{IEEE Trans. on Circuits and
  Systems II: Express Briefs}, vol.~69, no.~7, pp. 3108--3114, 2022.

\bibitem{teixeira2012attack}
A.~Teixeira, D.~P{\'e}rez, H.~Sandberg, and K.~H. Johansson, ``Attack models
  and scenarios for networked control systems,'' in \emph{Int. Conf. on High
  Confidence Networked Systems}, 2012, pp. 55--64.

\bibitem{giraldo2018survey}
J.~Giraldo, D.~Urbina, A.~Cardenas, J.~Valente, M.~Faisal, J.~Ruths, N.~O.
  Tippenhauer, H.~Sandberg, and R.~Candell, ``A survey of physics-based attack
  detection in cyber-physical systems,'' \emph{ACM Computing Surveys (CSUR)},
  vol.~51, no.~4, pp. 1--36, 2018.

\bibitem{kim2016encrypting}
J.~Kim, C.~Lee, H.~Shim, J.~H. Cheon, A.~Kim, M.~Kim, and Y.~Song, ``Encrypting
  controller using fully homomorphic encryption for security of cyber-physical
  systems,'' \emph{IFAC-PapersOnLine}, vol.~49, no.~22, pp. 175--180, 2016.

\bibitem{schluter2023brief}
N.~Schl{\"u}ter, P.~Binfet, and M.~S. Darup, ``A brief survey on encrypted
  control: From the first to the second generation and beyond,'' \emph{Annual
  Reviews in Control}, p. 100913, 2023.

\bibitem{darup2021encrypted}
M.~S. Darup, A.~B. Alexandru, D.~E. Quevedo, and G.~J. Pappas, ``Encrypted
  control for networked systems: An illustrative introduction and current
  challenges,'' \emph{IEEE Control Systems Magazine}, vol.~41, no.~3, pp.
  58--78, 2021.

\bibitem{naseri2021securing}
A.~M. Naseri, W.~Lucia, M.~Mannan, and A.~Youssef, ``On securing cloud-hosted
  cyber-physical systems using trusted execution environments,'' in \emph{IEEE
  Int. Conf. on Autonomous Systems}, 2021, pp. 1--5.

\bibitem{costello2015geppetto}
C.~Costello, C.~Fournet, J.~Howell, M.~Kohlweiss, B.~Kreuter, M.~Naehrig,
  B.~Parno, and S.~Zahur, ``Geppetto: Versatile verifiable computation,'' in
  \emph{IEEE Symposium on Security and Privacy}.\hskip 1em plus 0.5em minus
  0.4em\relax IEEE, 2015, pp. 253--270.

\bibitem{parno2016pinocchio}
B.~Parno, J.~Howell, C.~Gentry, and M.~Raykova, ``Pinocchio: Nearly practical
  verifiable computation,'' \emph{Communications of the ACM}, vol.~59, no.~2,
  pp. 103--112, 2016.

\bibitem{bunz2018bulletproofs}
B.~B{\"u}nz, J.~Bootle, D.~Boneh, A.~Poelstra, P.~Wuille, and G.~Maxwell,
  ``Bulletproofs: Short proofs for confidential transactions and more,'' in
  \emph{IEEE Symposium on Security and Privacy}.\hskip 1em plus 0.5em minus
  0.4em\relax IEEE, 2018, pp. 315--334.

\bibitem{fiore2014efficiently}
D.~Fiore, R.~Gennaro, and V.~Pastro, ``Efficiently verifiable computation on
  encrypted data,'' in \emph{ACM SIGSAC Conf. on Computer and Communications
  Security}, 2014, pp. 844--855.

\bibitem{fiore2020boosting}
D.~Fiore, A.~Nitulescu, and D.~Pointcheval, ``Boosting verifiable computation
  on encrypted data,'' in \emph{Public-Key Cryptography: IACR International
  Conference on Practice and Theory of Public-Key Cryptography}.\hskip 1em plus
  0.5em minus 0.4em\relax Springer, 2020, pp. 124--154.

\bibitem{cheon2020authenticated}
J.~H. Cheon, D.~Kim, J.~Kim, S.~Lee, and H.~Shim, ``Authenticated computation
  of control signal from dynamic controllers,'' in \emph{IEEE Conf. on Decision
  and Control}.\hskip 1em plus 0.5em minus 0.4em\relax IEEE, 2020, pp.
  3249--3254.

\bibitem{mahfouzi2021secure}
R.~Mahfouzi, A.~Aminifar, S.~Samii, P.~Eles, and Z.~Peng, ``Secure cloud
  control using verifiable computation,'' in \emph{IEEE Int. Conference on
  Omni-Layer Intelligent Systems}.\hskip 1em plus 0.5em minus 0.4em\relax IEEE,
  2021, pp. 1--6.

\bibitem{Crépeau2005}
\BIBentryALTinterwordspacing
C.~Cr{\'e}peau, \emph{Cut-and-choose protocol}.\hskip 1em plus 0.5em minus
  0.4em\relax Boston, MA: Springer US, 2005, pp. 123--124. [Online]. Available:
  \url{https://doi.org/10.1007/0-387-23483-7_92}
\BIBentrySTDinterwordspacing

\bibitem{chaum2008scantegrity}
D.~Chaum and et~al., ``Scantegrity ii: End-to-end verifiability for optical
  scan election systems using invisible ink confirmation codes,'' \emph{EVT},
  vol.~8, no.~1, p.~13, 2008.

\bibitem{smart2016cryptography}
N.~P. Smart, \emph{Cryptography made simple}.\hskip 1em plus 0.5em minus
  0.4em\relax Springer, 2016.

\bibitem{escudero2023safety}
C.~Escudero, C.~Murguia, P.~Massioni, and E.~Zama{\"\i}, ``Safety-preserving
  filters against stealthy sensor and actuator attacks,'' in \emph{IEEE Conf.
  on Decision and Control (CDC)}.\hskip 1em plus 0.5em minus 0.4em\relax IEEE,
  2023, pp. 5097--5104.

\bibitem{tiriolo2022receding}
C.~Tiriolo, G.~Franz{\`e}, and W.~Lucia, ``A receding horizon trajectory
  tracking strategy for input-constrained differential-drive robots via
  feedback linearization,'' \emph{IEEE Trans. on Control Systems Technology},
  vol.~31, no.~3, pp. 1460--1467, 2022.

\bibitem{paillier1999public}
P.~Paillier, ``Public-key cryptosystems based on composite degree residuosity
  classes,'' in \emph{Int. Conf. on the Theory and Applications of
  Cryptographic Techniques}.\hskip 1em plus 0.5em minus 0.4em\relax Springer,
  1999, pp. 223--238.

\end{thebibliography}

\end{document}